\tikzstyle{none}=[]
\tikzstyle{base}=[circle, fill=black!12, draw, inner sep=0pt, minimum width=8pt, minimum height=8pt, line width=0.5pt]
\tikzstyle{wideBase}=[circle, fill=black!12, draw, inner sep=0pt, minimum width=12pt, minimum height=12pt, line width=0.5pt]
\tikzstyle{dashStyle}=[-, dashed]
\tikzstyle{dotStyle}=[-,dotted]
\tikzstyle{baseEdge}=[-, draw=black, line width=1pt]
\tikzstyle{arrow}=[-{Latex[length=2mm]}, draw=black, line width=2pt]
\tikzstyle{slimArrow}=[->, draw=black, line width=1pt]
\tikzstyle{nodeLabel}=[shape=rectangle, fill=white, minimum width=8pt, minimum height=8pt, inner sep=0pt]
\definecolor{softgreen}{HTML}{2cab27}
\definecolor{sanguine}{HTML}{eb5a21}
\definecolor{softyellow}{HTML}{e6e337}
\definecolor{coolpink}{RGB}{228, 127, 226}
\definecolor{coolgreen}{RGB}{127, 211, 125}
\definecolor{coolbrown}{RGB}{204, 174, 161}
\definecolor{coolorange}{RGB}{248, 185, 126}
\definecolor{coolred}{RGB}{255, 127, 125}
\definecolor{darkcoolred}{RGB}{255, 74, 71}
\definecolor{coolblue}{HTML}{00b4d8}
\definecolor{darkcoolblue}{HTML}{0065d8}
\title{A simple quadratic kernel\\ for Token Jumping on surfaces}
\author{Daniel W. Cranston%
\thanks{%
 Department of Computer Science, Virginia Commonwealth
 University, Richmond, VA, USA;
 \texttt{dcranston@vcu.edu}
 }
 \and
Moritz Mühlenthaler%
\thanks{%
 Laboratoire G-SCOP, Grenoble INP,
 Grenoble, France;
 \texttt{moritz.muhlenthaler@grenoble-inp.fr}
 }
 \and
Benjamin Peyrille%
\thanks{%
 Laboratoire G-SCOP, Grenoble INP, Universit\'{e} Grenoble-Alpes,
 Grenoble, France;
 \texttt{benjamin.peyrille@grenoble-inp.fr}
 }
}
\date{4 August 2024} 
\newtheorem{theorem}{Theorem}
\newtheorem{lemma}[theorem]{Lemma}
\newtheorem{corollary}[theorem]{Corollary}
\crefname{lemma}{Lemma}{Lemmas}
\crefname{theorem}{Theorem}{Theorems}
\crefname{corollary}{Corollary}{Corollaries}
\crefname{definition}{Definition}{Definitions}
\crefname{conjecture}{Conjecture}{Conjectures}
\crefname{figure}{Figure}{Figures}
\crefname{algorithm}{Algorithm}{Algorithms}
\crefname{remark}{Remark}{Remarks}
\crefname{claim}{Claim}{Claims}
\newcommand{\TJ}{\textsc{Token Jumping}\xspace}
\newcommand{\YES}{\textsc{Yes}\xspace}
\def\aftermath{\par\vspace{-\belowdisplayskip}\vspace{-\parskip}\vspace{-\baselineskip}}
\begin{document}

\maketitle

\begin{abstract}
    The problem \TJ asks whether, given a graph $G$ and two independent sets of \emph{tokens} $I$ and $J$ of $G$, we can transform $I$ into $J$ by changing the position of a single token in each step and having an independent set of tokens throughout. We show that there is a polynomial-time algorithm that, given an instance of \TJ, computes an equivalent instance of size $O(g^2 + gk + k^2)$, where $g$ is the genus of the input graph and $k$ is the size of the independent sets.
    Our algorithm is very simple and does not require an embedding of the input graph.
\end{abstract}

\section{Introduction}

We consider an independent set of a graph to be a set of \emph{tokens} placed on pairwise non-adjacent vertices. From an independent set we may obtain another one by letting a token ``jump'' to some vertex that has no token on it and no token in its neighborhood. We say that two independent sets are \emph{TJ-equivalent}, if one can be obtained from the other by a sequence of jump-operations. The problem \TJ asks whether two given independent sets of a graph are TJ-equivalent. \TJ can be thought of as a motion planning problem on a graph, where the tokens correspond to moving agents. It is a variant of the \emph{independent set reconfiguration} problem, which has attracted considerable attention in the last ten years, in particular in the context of parameterized complexity~\cite{ISRsurvey,Heuvel:13,Nishimura:18}.

\TJ is known to be \PSPACE-complete even on subcubic planar graphs of bounded bandwidth~\cite{Zanden:15}. Furthermore, the problem is \W[1]-hard when parameterized by the number $k$ of tokens and the number $\ell$ of jump-operations~\cite{Mouawad:17}. Notably, there are several positive results for sparse graphs: \TJ parameterized by $k$ admits a linear kernel on graphs of bounded degree~\cite{Bartier:21} and a polynomial kernel on graphs of bounded degeneracy~\cite{Sparse:18}. Furthermore, Ito, Kami\'nski, and Ono showed that \TJ is fixed-parameter tractable on planar graphs and, more generally, on $K_{3,t}$-free graphs~\cite{Ito:14}.
Finally, Bousquet, Mary, and Parreau gave a polynomial kernel for $K_{t,t}$-free graphs~\cite{bousquet2017token}, which implies a polynomial kernel for graphs embeddable on a fixed surface. However, when the problem is parameterized by the genus $g$ of the surface (rather than forbidding $K_{t,t}$, where $t$ is fixed) and the size $k$ of the independent sets, their kernel is not polynomial.
Our main result is the following theorem.

\begin{theorem}
    \label{thm:main-result-kernel}
    \TJ parametrized by the size $k$ of the independent sets and the genus $g$ of the input graph admits a kernel of size $O(g^2 + gk + k^2)$.
\end{theorem}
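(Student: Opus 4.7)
My plan is to prove the kernel by a small collection of simple reduction rules that shrink the graph to the desired size, combined with a single application of Euler's formula. Let $R := I \cup J$, so $|R| \le 2k$, and partition the remaining vertices by their trace on $R$: for each $S \subseteq R$ set $\mathcal{C}_S := \{v \in V(G) \setminus R : N(v) \cap R = S\}$. Two vertices in the same $\mathcal{C}_S$ are \emph{$R$-twins}: they play the same role with respect to the fixed tokens of $R$, which is precisely the property that enables substitution in a reconfiguration sequence.

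The reductions would trim each class $\mathcal{C}_S$ to a bounded number of representatives: at most $O(k+g)$ private neighbors per $r \in R$ (so $O(k^2+gk)$ overall); at most $O(1)$ representatives per pair $\{r,r'\} \subseteq R$ (so $O(k^2)$ overall); all vertices with $|N(v) \cap R| \ge 3$, which are few, as shown below; and a bounded number of ``free'' vertices from $\mathcal{C}_\emptyset$ as parking spots. Each rule is proved sound by a substitution argument: given a TJ-sequence that uses a deleted vertex $v \in \mathcal{C}_S$, we substitute a surviving $R$-twin $v' \in \mathcal{C}_S$; the shared $R$-neighborhood guarantees no conflict with tokens in $R$, and keeping sufficiently many survivors per class ensures that at every step a $v'$ can be chosen that avoids the rest of the current intermediate independent set.

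Euler's formula is used only to count vertices with $\ge 3$ neighbors in $R$. The bipartite subgraph $H$ of $G$ between $R$ and $B_{\ge 3} := \bigcup_{|S|\ge 3}\mathcal{C}_S$ is simple, bipartite, and has genus at most $g$, so $|E(H)| \le 2(|R|+|B_{\ge 3}|) + 4g - 4$. Combined with the lower bound $|E(H)| \ge 3|B_{\ge 3}|$, this gives $|B_{\ge 3}| = O(k+g)$. Summing the four contributions and consolidating Euler's correction terms across the subgraphs analyzed yields $O((k+g)^2) = O(g^2 + gk + k^2)$, matching the announced bound. Note that the entire argument is local and combinatorial, which accounts for the paper's remark that no explicit embedding of $G$ is needed.

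The main obstacle will be establishing soundness of the twin-reduction rules. $R$-twins are interchangeable only with respect to $R$; their neighborhoods outside $R$ may be arbitrary, so a substituted $v'$ must avoid adjacency with the \emph{entire} current independent set, not only its intersection with $R$. This blocks reducing each class to a single vertex and forces the thresholds to depend on how many class members may be simultaneously in use (as tokens or as neighbors of tokens) over the course of a reconfiguration. Pinning down these thresholds correctly, and showing that the pigeonhole principle always supplies a safe substitute, is the delicate step that determines the precise dependence on $g$ and $k$ in the final bound.
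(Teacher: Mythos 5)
Your decomposition by trace on $R=I\cup J$ is exactly the paper's, and your Euler count of $B_{\ge 3}$ is correct (indeed cleaner and tighter than the paper's, which counts faces and then multiplies by the $K_{3,4g+3}$-freeness bound). But the proposal has a genuine gap precisely at the step you defer: the soundness of the twin reduction for the pair classes $\mathcal{C}_{\{r,r'\}}$ (and likewise for your trimming of private neighbors and of $\mathcal{C}_\emptyset$). Being $R$-twins gives no control over adjacencies to the \emph{moving} tokens, and no threshold on the number of retained representatives fixes this by pigeonhole alone: a single external vertex $w$ can be adjacent to \emph{every} retained member of $\mathcal{C}_{\{r,r'\}}$, so one token on $w$ blocks all substitutes. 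The only a priori bound is $K_{3,m}$-freeness ($|N(w)\cap C_{\{r,r'\}}|\le 4g+2$), and pigeonholing with that forces $\Omega(gk)$ representatives per pair, i.e.\ a kernel of size $\Omega(g^2k+gk^2)$ --- worse than the claimed bound. Moreover the retained representatives must contain a large independent set (several tokens may need to occupy the class simultaneously), which mere multiplicity does not provide. Also, $O(1)$ representatives per pair cannot work even in principle; the paper keeps $2k+2$ per pair.

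What closes the gap in the paper is topological, not combinatorial: for each pair $Y=\{u,v\}$, the $u$--$w$--$v$ paths fall into at most $\max\{1,4g\}$ homotopy classes; each class has at most two ``outer'' midpoints, the ``inner'' midpoints induce a linear forest, and --- crucially --- every vertex outside $C_Y\cup Y$ is adjacent to at most two inner vertices. A filtering algorithm then extracts an independent set $T_Y\subseteq C_Y$ of size $2k+2$ in which every external vertex has at most two neighbors, and \emph{that} is the property making the substitution/pigeonhole argument in the equivalence proof go through ($|T_Y|-2k\ge 2$ free vertices at every step). Your proposal contains no substitute for this structural lemma, so the ``delicate step'' you flag is not a matter of tuning thresholds but a missing idea. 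Two smaller points: the paper handles $\mathcal{C}_1$ not by twin reduction but by answering \YES outright when $|\mathcal{C}_1|\ge H(g)k$ (a Heawood-coloring argument), which sidesteps the same soundness issue for low-degree classes; and while the kernelization \emph{algorithm} needs no embedding, its \emph{correctness proof} is not ``local and combinatorial'' --- it leans on the embedding throughout.
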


Our kernelization algorithm does not require any information about the genus of the input graph, which is \NP-hard to compute, unlike for instance the size $k$ of the independent sets.
The general algorithmic idea, which is already present in~\cite{Ito:14}, is the following. Consider an instance of \TJ given by a graph $G$ and two independent sets $I$ and $J$ of $G$, each of size $k$. For $Y \subseteq I \cup J$, let $C_Y$ be the set of vertices of $G$, whose set of neighbors is precisely $Y$. In order to obtain the kernel of~\cref{thm:main-result-kernel}, we partition the vertices not in $I$ and $J$ into three classes, depending on how many neighbors in $I \cup J$ they have:
\[
    \mathcal{C}_1 := \bigcup_{Y \subseteq I\cup J, |Y|\leq 1} C_Y  \hspace{30pt} \mathcal{C}_2 := \bigcup_{Y \subseteq I\cup J, |Y| = 2} C_Y \hspace{30pt} \mathcal{C}_3 := \bigcup_{Y \subseteq I\cup J, |Y| \geq 3} C_Y
\] 
We then show that $|\mathcal{C}_1| = O(\sqrt{g}\cdot k)$ or we can answer \textsc{Yes} (\cref{lem:c1-bound}), $|\mathcal{C}_3| = O(g^2 + gk + k)$ (\cref{lem:c3-bound}), and that we can replace $\mathcal{C}_2$ by a set of $O(g^2 + gk + k^2)$ vertices. The bounds on $\mathcal{C}_1$ and $\mathcal{C}_3$ are obtained by Heawood's and Euler's formulas, respectively. 
The algorithm we propose uses $\mathcal{C}_2$ as a buffer space for showing the jump-equivalence of $I$ and $J$. Our main contribution is a simple algorithm that reduces the size of $\mathcal{C}_2$.  Our analysis of this algorithm is elementary in the sense that it neither requires a drawing of the input graph nor makes use of any Ramsey-type result. 

Ito, Kami\'nski, and Ono use a bound on the Ramsey numbers to show that for planar graphs, if the subgraph induced by the neighbors of a pair $Y \in \mathcal{C}_2$ is larger than some $f(k)$, then it can be replaced by an independent set of size $k$ \cite{Ito:14}. Using this approach, they obtain a kernel of size $O(2^{6k})$ for planar graphs. For $K_{3,t}$-free graphs they do not give an explicit bound, but the reduced instance seems to be of size at least $2^{2k+(k+t+1)^{t+2}}$. Bousquet, Mary, and Parreau use a theorem of Kövári, Sós, and Turán to obtain a kernel for $K_{t, t}$-free graphs of size $O(f(t)\cdot {k^{t\cdot 3^t}})$~\cite{bousquet2017token}. 
We propose a simple polynomial-time algorithm that reduces the size of $\mathcal{C}_2$ to $O(g^2 + gk + k^2)$. 
This significantly improves on the results of~\cite{Ito:14} for planar graphs and~\cite{bousquet2017token} for bounded-genus graphs. Furthermore, since we do not require any arguments from extremal graph/set theory, our constants are reasonably small.

Fix $Y$ such that $\mathcal{C}_Y \in \mathcal{C}_2$. Since the input graph $G$ admits a crossing-free drawing on a surface of genus $g$, the subgraph of $G$ induced by $C_Y \cup Y$ can be partitioned into at most $4g$ homotopy classes (the endpoints of each curve are the vertices of $Y$). Each homotopy class corresponds to a planar subgraph of $G$ with a rather simple structure: any vertex outside of $C_Y$ that is adjacent to at least $3$ vertices of $C_Y$ cannot be adjacent to vertices of $C_Y$ that are not on some outer face. 
Using this, we can obtain in polynomial time a large enough linear forest of $C_Y$ which gives us an independent set $T_Y$ of size $2k+2$ such that no vertices outside of $C_Y$ are adjacent to more than two vertices of $T_Y$. We then replace all such $C_Y$ by $T_Y$.

A direct consequence of \cref{thm:main-result-kernel} is the following result, which improves on the polynomial kernel given in~\cite{bousquet2017token}.

\begin{corollary}
\label{cor:linear-kernel-bounded}
\TJ parameterized by the size $k$ of the independent sets admits a kernel of size $O(k^2)$ on graphs of bounded genus.
\end{corollary}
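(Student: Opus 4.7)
The plan is to derive this as an immediate consequence of \cref{thm:main-result-kernel}. Since the class of graphs under consideration has bounded genus, there is a constant $g_0$ such that every input graph $G$ embeds on a surface of genus at most $g_0$. Applying the kernelization algorithm of \cref{thm:main-result-kernel}, which does not require knowledge of the genus, we obtain in polynomial time an equivalent instance of size $O(g_0^2 + g_0 k + k^2)$.

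With $g_0$ treated as a constant, the terms $g_0^2$ and $g_0 k$ are absorbed into $O(k^2)$ (the former is $O(1) \subseteq O(k^2)$ and the latter is $O(k) \subseteq O(k^2)$), yielding the claimed bound of $O(k^2)$. The only mild point to mention is that the output of the kernelization algorithm is still a graph embeddable on a surface of genus at most $g_0$, so the reduced instance indeed lies in the same class; this is because our reduction only removes vertices and replaces $\mathcal{C}_2$ by a subset of its own vertices (the independent set $T_Y$), so the resulting graph is a subgraph of $G$ and hence has genus at most $g_0$. There is no serious obstacle here: the entire content of the corollary is absorbing the $g$-dependence into the implicit constant.
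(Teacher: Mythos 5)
Your proposal is correct and matches the paper, which presents this corollary as a direct consequence of \cref{thm:main-result-kernel}: with the genus bounded by a constant $g_0$, the $O(g_0^2 + g_0 k + k^2)$ bound collapses to $O(k^2)$. The extra remark that the kernel remains in the bounded-genus class (being an induced subgraph) is harmless and accurate, though not strictly needed for the kernelization claim.
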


Using the same general approach and a very simple analysis, we also obtain a sub-quadratic kernel for $K_{2,3}$-free graphs, which notably include outerplanar graphs.

\section{Preliminaries}
\label{sec:prelim}

All graphs in this work are finite, undirected, and simple. Let $G = (V, E)$ be a graph. A set $I \subseteq V$ of vertices is an \emph{independent set} if the vertices in $I$ are pairwise non-adjacent. A proper vertex coloring of $G$ with $k$ colors is a partition of $V$ into $k$ independent sets. Two independent sets $I$ and $J$ of $G$ are \emph{TJ-equivalent} if there exists a sequence $I_1, I_2, \ldots, I_\ell$ of independent sets of $G$, such that $I = I_1$ and $J = I_\ell$ and for $1 \leq i < \ell$, we have $|I_i \setminus I_{i+1}| = |I_{i+1} \setminus I_i| =1$. Intuitively, at each step $i$, the independent set $I_{i+1}$ is obtained from $I_i$ by changing the position of a single token. 
Given a graph $G$ together with two independent sets $I$ and $J$ of $G$, each of size $k$, the problem \TJ asks whether $I$ and $J$ are TJ-equivalent.

A decision problem is \emph{fixed-parameter tractable (FPT)} for some parameter if there exists an algorithm that decides an instance of size $n$ with parameter $k \in \mathbb{N}$ in time $f(k)\cdot \poly(n)$, where $f : \mathbb{N} \to \mathbb{N}$ is some computable function. A \emph{kernelization algorithm} is a polynomial-time algorithm that takes as input an instance of size $n$ with parameter $k$ and outputs an equivalent instance (called \emph{kernel}) of size $g(k)$, where $g : \mathbb{N} \to \mathbb{N}$ is some computable function.

\section{A kernel for \TJ on surfaces}
\label{sec:kernel}

In this section we prove our main result, \cref{thm:main-result-kernel}. In the following, let $(G, I, J)$ be an instance of \TJ, where $G = (V, E)$ is a graph of genus $g$ and $I$ and $J$ are independent sets of $G$, each of size $k$. Let $X = I \cup J$ and notice that $|X| \leq 2k$.
The \emph{Heawood number} $H(g)$ is given by $H(g) = \left\lfloor (7 + \sqrt{1 + 48g})/{2} \right\rfloor$.  Heawood's proof from 1890~\cite{Heawood} and the four color theorem imply that any graph of genus $g$ admits a proper vertex coloring with at most $H(g)$ colors.


\begin{lemma}
\label{lem:c1-bound}
    If $| \mathcal{C}_1 | \geq H(g)\cdot k$, then $(G, I, J)$ is a \YES-instance.
\end{lemma}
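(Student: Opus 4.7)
The plan is to exhibit, under the size assumption, an independent set $B\subseteq\mathcal{C}_1$ of size $k$ such that both $I\TJeq B$ and $J\TJeq B$; by transitivity this yields $I\TJeq J$, so $(G,I,J)$ is a \YES-instance.

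First, since $G[\mathcal{C}_1]$ is a subgraph of $G$ it has genus at most $g$, so Heawood's theorem (together with the four color theorem in the planar case) gives a proper coloring of $G[\mathcal{C}_1]$ with at most $H(g)$ colors. Pigeonholing then yields an independent set in $\mathcal{C}_1$ of size at least $|\mathcal{C}_1|/H(g)\geq k$, and I would take any $k$-element subset of it to be $B$.

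The core step is showing $I\TJeq B$ by a straightforward greedy procedure (the argument for $J$ being identical). I would maintain an independent set $I_{\text{curr}}\subseteq I\cup B$ of size $k$, initialized to $I$. While $B\not\subseteq I_{\text{curr}}$, pick any $b\in B\setminus I_{\text{curr}}$. Since $b\in\mathcal{C}_1$, it has at most one neighbor $t_b$ in $X\supseteq I$, and since $B$ is independent $b$ has no neighbor in $I_{\text{curr}}\cap B$; hence the neighbors of $b$ in $I_{\text{curr}}$ are contained in $\{t_b\}\cap I_{\text{curr}}$. If $t_b\in I_{\text{curr}}$, I would jump the token from $t_b$ to $b$; otherwise I would jump any token $t\in I_{\text{curr}}\setminus B$ (non-empty because $|I_{\text{curr}}\setminus B|=|B\setminus I_{\text{curr}}|>0$) to $b$. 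In either case the resulting set is an independent subset of $I\cup B$ of size $k$, and $|I_{\text{curr}}\cap B|$ strictly increases, so after at most $k$ iterations $I_{\text{curr}}=B$.

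The only potential obstacle is verifying that every proposed jump is valid, but this falls out cleanly from combining the $\mathcal{C}_1$ condition (at most one neighbor in $X$) with the independence of $B$; no information about an embedding of $G$ or finer structure of bounded-genus graphs is required.
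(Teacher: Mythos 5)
Your proof is correct and follows essentially the same route as the paper: extract a size-$k$ independent set from $\mathcal{C}_1$ via the Heawood/four-color bound and pigeonhole, then greedily relocate the tokens of $I$ (and symmetrically $J$) onto it, using the fact that each vertex of $\mathcal{C}_1$ has at most one neighbor in $X$. Your version just spells out the case analysis that the paper's one-line ``greedily move, starting with the tokens adjacent to $I_m$'' leaves implicit.
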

\begin{proof}
    Assume that $| \mathcal{C}_1 | \geq H(g) \cdot k$. Since $G$ admits a proper vertex coloring with $H(g)$ colors, $G[\mathcal{C}_1]$ contains an independent set $I_m$ of size $k$. We may greedily move all tokens of $I$ to $I_m$ and do the same for the tokens of $J$, starting with the tokens adjacent to $I_m$.
\end{proof}

From now on we assume that $|\mathcal{C}_1| \leq H(g)\cdot k$. We bound the size of $\mathcal{C}_3$ using Euler's formula.

\begin{lemma}
\label{lem:c3-bound}
    $|\mathcal{C}_3| \leq 16g^2 + 8g(2k-1) + 8k$.
\end{lemma}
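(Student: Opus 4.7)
The plan is to bound $|\mathcal{C}_3|$ by a single application of Euler's formula to a well-chosen bipartite auxiliary subgraph. Write $X := I \cup J$, so $|X| \leq 2k$, and let $B$ be the bipartite subgraph of $G$ with vertex set $X \cup \mathcal{C}_3$ whose edges are exactly the edges of $G$ joining $X$ to $\mathcal{C}_3$. By construction $B$ is bipartite, and as a subgraph of $G$ it has (orientable) genus at most $g$.

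I will then double-count $|E(B)|$. A lower bound is immediate from the definition of $\mathcal{C}_3$: every vertex of $\mathcal{C}_3$ has at least three neighbors in $X$, so $|E(B)| \geq 3|\mathcal{C}_3|$. For the upper bound, I will invoke the standard bipartite-graph consequence of Euler's formula on a surface of genus $g$, namely $|E(B)| \leq 2(|V(B)| - 2 + 2g)$. This follows from $|V(B)| - |E(B)| + |F(B)| \geq 2 - 2g$ combined with the face-length bound $4|F(B)| \leq 2|E(B)|$, which holds because every face of a bipartite embedding has length at least $4$.

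Combining these two inequalities and using $|V(B)| \leq 2k + |\mathcal{C}_3|$ yields, after a one-line simplification, $|\mathcal{C}_3| \leq 4k + 4g - 4$. This is dominated by the asserted bound $16g^2 + 8g(2k-1) + 8k$ for all $g, k \geq 0$, so the lemma follows (with room to spare).

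I do not foresee a serious obstacle. The one place where care is needed is to invoke the bipartite version of Euler's inequality (with the factor $2$ in the edge bound) rather than the generic bound $|E| \leq 3(|V| - 2 + 2g)$ valid for arbitrary simple graphs; the latter absorbs the lower bound $3|\mathcal{C}_3|$ entirely and yields only the trivial inequality $0 \leq 3|X| - 6 + 6g$. A small amount of book-keeping is also needed to cover degenerate cases (e.g.\ $|X| < 3$, which forces $\mathcal{C}_3 = \emptyset$ and makes the claim vacuous), but these are immediate.
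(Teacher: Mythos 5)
Your argument is correct, and it takes a genuinely different route from the paper. The paper proceeds in two steps: it first bounds the number of sets $Y\subseteq X$ with $|Y|\ge 3$ and $C_Y\neq\emptyset$ by $2|X|+4(g-1)$, via an auxiliary graph in which each such $Y$ becomes a face, and it then bounds each individual $|C_Y|$ by $4g+2$ using the fact that a genus-$g$ graph contains no $K_{3,m}$ with $m\ge 4g+3$; multiplying the two bounds produces the quadratic-in-$g$ expression in the statement. You instead count incidences directly in the bipartite subgraph $B$ between $X$ and $\mathcal{C}_3$: the minimum-degree condition gives $|E(B)|\ge 3|\mathcal{C}_3|$, the bipartite Euler bound gives $|E(B)|\le 2(|V(B)|-2+2g)$, and subtracting yields $|\mathcal{C}_3|\le 2|X|+4g-4\le 4k+4g-4$. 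This is not only a valid proof of the lemma (your bound dominates the stated one for all $g,k\ge 0$) but a strictly stronger, linear bound obtained in one step, and it would slightly improve the constants in the final kernel size (though not its asymptotics, which are driven by $\mathcal{C}_2$). You are right that the factor $2$ from bipartiteness is essential, since the generic bound $|E|\le 3(|V|-2+2g)$ collapses to a vacuous statement. The only details worth spelling out are the ones you flag: $X$ and $\mathcal{C}_3$ are disjoint because the classes partition $V\setminus(I\cup J)$; the Euler bound should be applied per connected component (using additivity of orientable genus over components), where tree components satisfy $|E|\le |V|-1\le 2(|V|-2)$ because every nontrivial component of $B$ contains a $\mathcal{C}_3$-vertex of degree at least $3$ and hence has at least four vertices, so no degenerate two-vertex components arise.
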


\begin{proof}
    We will bound the number of sets $Y$ such that $Y \subseteq X, |Y| \geq 3$ and $\mathcal{C}_Y \neq \emptyset$.
    For this purpose we 
    consider a drawing of $G$ on a surface of genus $g$.
    We construct a graph $G'$ from $G$ as follows. First, take the subgraph of $G$ induced by the vertices $X$ and for each $Y \subseteq X$ such that $|Y| \geq 3$ and $\mathcal{C}_Y \neq \emptyset$ the vertex $v_Y$ whose neighborhood is $Y$. For each such $Y$, the embedding induces a cyclic ordering $\{v_1, \cdots, v_t\}$ of the neighbors of $v_Y$. We create an edge from $v_i$ to $v_{i+1}$ following the path $v_i,v_Y, v_{i+1}$ for all $i \in \{1, \cdots, t-1\}$ and an edge from $v_1$ to $v_t$ following the path $v_1, v_Y, v_t$. We then remove the vertex $v_Y$ and eliminate parallel edges, creating in the process a unique face $f_Y$ that we associate to the set $Y$. The resulting graph is $G'$.

    Let $G''$ be a triangulation of $G'$. Then $2|E(G'')| = \sum_{v \in V(G'')} d(v) = 3 |F(G'')|$ and thus $|E(G'')| = 3|F(G'')|/2$.
    Let $g'$ be the genus of $G''$ and observe that $g' \leq g$.
    By Euler's formula, 
    $2-2g \le 2-2g'=|V(G'')|-|E(G'')|+|F(G'')|=|V(G'')|-|F(G'')|/2$.  Thus, $|F(G')|\le |F(G'')|\le 2|V(G'')|+4(g-1)\le 2|X|+4(g-1)$.

    Hence, there are at most $2|X| + 4(g-1)$ sets $Y \subseteq X$ such that $|Y| \geq 3$ and $C_Y \neq \emptyset$. By \cite{bouchet1978orientable,ringel1965dasgd}, if a graph has genus $g$, it contains no $K_{3,m}$ as a subgraph when $m \geq 4g+3$. 
    So we obtain
    \begin{align*}
        |\displaystyle \bigcup_{\substack{Y \subseteq X\\|Y| \geq 3}} C_{Y}| &\leq (4g + 2)(2|X| + 4(g-1))\\
        &= 16g(g-1) + 8g|X| + 8g + 4|X| - 8\\
        &\leq 16g^2 - 8g + 16gk + 8k\\
        &= 16g^2 + 8g(2k - 1) + 8k.
    \end{align*}
    \aftermath
\end{proof}

Let $\mathcal{P}$ be the set of pairs $u,v$ such that $u,v \in X$ and $C_{\{u,v\}} \neq \emptyset$.

\begin{lemma}
\label{lem:c2-bound}
    $|\mathcal{P}| \leq 3|X| + 6(g-1)$.
\end{lemma}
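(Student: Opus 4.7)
The plan is to adapt the auxiliary-graph argument from \cref{lem:c3-bound}, but in a much simpler form since pairs have only two endpoints and the homotopy/cyclic-ordering machinery is unnecessary. I would build an auxiliary simple graph $G'$ on the vertex set $X$ by, for each pair $Y = \{u,v\} \in \mathcal{P}$, choosing one representative vertex $v_Y \in C_Y$ (so that $N_G(v_Y) = \{u,v\}$) and adding the single edge $uv$ to $G'$. This makes $G'$ a simple graph with $|V(G')| \leq |X|$ and $|E(G')| = |\mathcal{P}|$.

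Next, I would show that $G'$ embeds on a surface of genus $g$. Fix a crossing-free drawing of $G$ on such a surface. For each $Y = \{u,v\} \in \mathcal{P}$, the two edges $uv_Y$ and $vv_Y$ in the drawing of $G$ concatenate into a simple arc from $u$ to $v$ whose interior meets $V(G)$ only at $v_Y$. I would draw the edge $uv$ of $G'$ along this arc. For distinct $Y,Y' \in \mathcal{P}$, the representatives $v_Y$ and $v_{Y'}$ are distinct vertices of $G$ (they have different neighborhoods), and the edges of $G$ are pairwise non-crossing, so the arcs associated to $Y$ and $Y'$ are disjoint except possibly at shared endpoints in $X$. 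Hence $G'$ embeds on the surface of genus $g$.

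Finally, I would apply Euler's formula: any simple graph embedded on a surface of genus $g$ with at least three vertices satisfies $|E| \leq 3|V| - 6 + 6g$, yielding
\[
    |\mathcal{P}| = |E(G')| \leq 3|V(G')| + 6(g-1) \leq 3|X| + 6(g-1).
\]
The degenerate cases $|X| \leq 2$ give $|\mathcal{P}| \leq 1$ and can be handled directly.

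The main (minor) obstacle is the verification that the constructed drawing of $G'$ is really crossing-free, which reduces to observing that $v_Y \neq v_{Y'}$ whenever $Y \neq Y'$; this is immediate from the definition of $C_Y$, so no extremal or Ramsey-type input is needed, keeping the argument in the same elementary spirit as \cref{lem:c3-bound}.
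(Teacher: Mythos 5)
Your proposal is correct and follows essentially the same route as the paper: both construct an auxiliary simple graph on $X$ with one edge per pair of $\mathcal{P}$, drawn along the path $u$--$v_Y$--$v$ through a representative $v_Y \in C_Y$, and then apply the Euler-formula edge bound $|E| \leq 3|V| + 6(g-1)$ for simple graphs on a genus-$g$ surface. The only cosmetic difference is that the paper derives this bound by explicitly triangulating the auxiliary graph, whereas you invoke it directly.
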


\begin{proof}
    Similarly to the proof of \cref{lem:c3-bound}, we draw a graph $G'$ such that $G'$ is obtained by restricting $G$ to the vertices of $X$ such that they are in a pair $Y \in \mathcal{P}$ and one vertex $v_Y$ in $C_Y$ for each $Y \in \mathcal{P}$. For each $\{u,v\} \in \mathcal{P}$, we add an edge $u,v$ following the path $u,v_Y,v$ and then remove the vertex $v_Y$. We then triangulate $G'$ to get $G''$ and apply Euler's formula. The number of edges in $G'$ correspond to the number of elements
    in $\mathcal{P}$; that is, $|\mathcal{P}|=|E(G'')|$.  Thus, the lemma follows from the inequality below.  Since $g(G'')\le g$ and $3|F(G'')|=2|E(G'')|$, we have

    \begin{align*}
        |E(G'')| &\le |V(G'')| + |F(G'')| - (2- 2g)\\
        &= 3|V(G'')|-3(2-2g)\\
        &\leq 3|X| + 6(g-1).
    \end{align*}
    \aftermath
\end{proof}

To deal with $\mathcal{C}_2$, for each $Y \in \mathcal{P}$ such that $|C_Y|$ is large enough, we will use \cref{alg:filtering-bis} to find a vertex set $T_Y$ in $C_Y$ of size $O(k)$ such that we can replace $C_Y$ by $T_Y$ in order to create an equivalent instance $(G', I, J)$, where $G'$ is a subgraph of $G$.

Consider an arbitrary 2-cell embedding of $G$ on an orientable surface of genus~$g$ and a corresponding crossing-free drawing $D$ 
of $G$.  For a pair $Y = \{u,v\} \in \mathcal{P}$, consider the restriction $D[C_Y]$ of $D$ to the drawing of the subgraph $G[C_Y]$. The drawing $D[C_Y]$ partitions the paths $u - w - v$ into homotopy classes. Let $\mathcal{C}$ be any such homotopy class and let 
$C$ be the set of midpoints of the paths in $\mathcal{C}$. 
Since $D$ corresponds to a 2-cell embedding, the restriction of $D$ to the paths in $\mathcal{C}$ is homeomorphic to a planar drawing of $\mathcal{C}$. Therefore, there exists an outer face $f_0$ in the restriction of $D$ to $\mathcal{C}$. 
If more than $2$ vertices in $C$ touch $f_0$, say $w$, $w'$, and $w''$, then we obtain an embedding of $K_{3,3}$ in the plane by placing a vertex inside $f_0$ and connecting it to $w$, $w'$, and $w''$, which is impossible. Therefore, at most two vertices of $C$ touch $f_0$. We call those vertices \emph{outer} and all other vertices of $C$ \emph{inner}. We denote by $N^p(\{u,v\})$ the set of inner vertices corresponding to the pair $u, v$. By iteratively removing outer vertices and their corresponding paths in $\mathcal{C}$ from the drawing (breaking ties arbitrarily), we obtain a linear order $w_1, w_2, \ldots, w_\ell$ on the midpoints~$C$. 

Let $u, v$ be an arbitrary pair in $\mathcal{P}$. 
By the discussion above, the homotopy classes of the paths $u - w - v$ partition the drawing $D$ restricted to the paths $u - w - v$ into (not necessarily connected) regions of the surface that we call \emph{zones}. 
By the following lemma, the number of homotopy classes, and therefore the number of zones, is bounded by a function of $g$ (see \cite[Proposition 4.2.7]{mohar2001graphs} and the discussion after its proof).
Let $f_M(g) := \max \{1, 4g \}$.

\begin{lemma}[{see \cite[Proposition 4.2.7]{mohar2001graphs}}]
\label{thm:mohar-homotopy}
Let $G$ be a graph embedded on an orientable surface of genus $g$. If $P_1, \cdots, P_k$ 
    are $u$-$v$ paths of $G$ ($u \neq v$) such that no two are homotopy-equivalent with respect to the embedding, then $k \leq f_M(G)$.
\end{lemma}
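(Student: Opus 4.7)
The plan is to follow the classical proof strategy of \cite[Proposition 4.2.7]{mohar2001graphs}, reducing the statement to a multigraph calculation via Euler's formula. The key idea is to replace the paths by curves that trace their homotopy classes but only meet at the endpoints, turning the problem into one about a multigraph on two vertices.

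First I would replace each path $P_i$ by a curve $e_i$ on the surface $\Sigma$ in the same homotopy class, chosen so that $e_1, \ldots, e_k$ pairwise meet only at $u$ and $v$. The pairwise non-homotopy of the $P_i$ carries over to the $e_i$, so we obtain an embedded multigraph $M$ on $\Sigma$ with $V(M) = \{u, v\}$ and $E(M) = \{e_1, \ldots, e_k\}$.

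Next I would apply Euler's formula to $M$. Assume $k \geq 2$, since the case $k = 1$ is trivial. Every face of $M$ is bounded by a closed walk alternating between $u$ and $v$, hence of even length; moreover, a face of length $2$ would be a disk bounded by two edges $e_i$ and $e_j$, witnessing their homotopy. So every face has length at least $4$. For any embedding of $M$ on a surface of genus $g$ we have $V(M) - E(M) + F(M) \geq 2 - 2g$, hence $F(M) \geq k - 2g$. Combined with the edge-face incidence inequality $2 E(M) \geq 4 F(M)$, this yields $2k \geq 4(k - 2g)$, i.e., $k \leq 4g$. For $g = 0$ the same calculation forces $k \leq 0$, contradicting $k \geq 2$, so $k \leq 1$; in all cases $k \leq f_M(g)$.

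The main obstacle is making the first step rigorous: since the paths $P_i$ may share edges and internal vertices of $G$, selecting pairwise internally disjoint curves within the given homotopy classes requires a general-position argument on the surface (via, for instance, bigon removal or the universal cover). This is a classical fact about systems of non-homotopic arcs and is precisely what \cite[Proposition 4.2.7]{mohar2001graphs} packages, so we would ultimately appeal to that reference for the rigorous justification.
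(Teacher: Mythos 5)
The paper does not prove this lemma at all: it is stated as a black-box citation of \cite[Proposition 4.2.7]{mohar2001graphs}, so there is no ``paper proof'' to compare against. Your sketch is the standard argument behind that proposition and the arithmetic is correct: with $V(M)=2$, $E(M)=k$, Euler's inequality gives $F(M)\ge k-2g$, and $2k=2E(M)\ge 4F(M)$ then yields $k\le 4g$ (and a contradiction for $g=0$, $k\ge 2$), matching $f_M(g)=\max\{1,4g\}$. The two places where rigor is deferred are exactly the content of the cited proposition: (i) choosing internally disjoint representatives of the homotopy classes, which you flag; and (ii) the claim that a face of boundary length $2$ is a disk, which presupposes the embedding of $M$ is cellular --- for a non-cellular embedding a length-$2$ face need not be a bigon and the two arcs need not be homotopic, so one must either reduce to a cellular embedding or argue via the strictness of Euler's inequality. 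Since you explicitly fall back on the reference for these points, the proposal is acceptable as a justification of the cited statement.
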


\Cref{fig:zones-torus} shows a drawing of a set of $u$-$v$ paths on the torus having $8$ outer vertices and $4$ zones.
Clearly, there are no edges between vertices in two different zones of the same pair $u, v$ in $\mathcal{P}$.
The following lemma confirms the intuition that any two inner vertices of two zones that correspond to different pairs in $\mathcal{P}$ are non-adjacent.

\begin{figure}
    \centering
    \begin{tikzpicture}[scale=0.6]
	\begin{pgfonlayer}{nodelayer}
		\node [style=none] (0) at (-5, 5) {};
		\node [style=none] (1) at (5, 5) {};
		\node [style=none] (2) at (5, -5) {};
		\node [style=none] (3) at (-5, -5) {};
		\node [style=base] (4) at (0, 2) {$u$};
		\node [style=base] (5) at (0, -2) {$v$};
		\node [style=base] (6) at (-1.5, 0) {};
		\node [style=base] (7) at (1.5, 0) {};
		\node [style=base] (10) at (-1.25, 4) {};
		\node [style=base] (11) at (1.25, 4) {};
		\node [style=none] (12) at (1, 5) {};
		\node [style=none] (13) at (-1, 5) {};
		\node [style=none] (14) at (-1, -5) {};
		\node [style=none] (15) at (1, -5) {};
		\node [style=none] (16) at (5, 1) {};
		\node [style=none] (17) at (5, -1) {};
		\node [style=base] (18) at (4, 1.75) {};
		\node [style=base] (19) at (4, -1) {};
		\node [style=none] (20) at (-5, 1) {};
		\node [style=none] (21) at (-5, -1) {};
		\node [style=base] (22) at (3.25, -2) {};
		\node [style=base] (23) at (2.5, -4) {};
		\node [style=none] (24) at (5, -3) {};
		\node [style=none] (25) at (5, -4.25) {};
		\node [style=none] (26) at (-5, -4.25) {};
		\node [style=none] (27) at (-5, -3) {};
		\node [style=none] (28) at (-4.25, -5) {};
		\node [style=none] (29) at (-3, -5) {};
		\node [style=none] (30) at (-3, 5) {};
		\node [style=none] (31) at (-4.25, 5) {};
	\end{pgfonlayer}
	\begin{pgfonlayer}{edgelayer}
		\draw [style=arrow, line width=1pt, dashed] (0.center) to (1.center);
		\draw [style=arrow, line width=1pt, dashed] (2.center) to (1.center);
		\draw [style=arrow, line width=1pt, dashed] (3.center) to (2.center);
		\draw [style=arrow, line width=1pt, dashed] (3.center) to (0.center);

        \draw [opacity=0.5, fill=coolred, line width=2pt, draw=coolred] (5.center) -- (6.center) -- (4.center) -- (7.center) -- (5.center);
        
        \draw [opacity=0.5, fill=coolblue, line width=2pt, draw=coolblue] (4) -- (10.center) -- (13.center) -- (12.center) -- (11.center) -- (4);
        \draw [opacity=0.5, fill=coolblue, line width=2pt, draw=coolblue] (5) -- (14.center) -- (15.center) -- (5);

        \draw [opacity=0.5, fill=coolorange, line width=2pt, draw=coolorange] (4) -- (18.center) -- (16.center) -- (17.center) -- (19.center) -- (4);
        \draw [opacity=0.5, fill=coolorange, line width=2pt, draw=coolorange] (5) -- (20.center) -- (21.center) -- (5);

        \draw [opacity=0.5, fill=coolorange, line width=2pt, draw=coolorange] (4) -- (18.center) -- (16.center) -- (17.center) -- (19.center) -- (4);
        \draw [opacity=0.5, fill=coolorange, line width=2pt, draw=coolorange] (5) -- (20.center) -- (21.center) -- (5);

        \draw [opacity=0.5, fill=coolpink, line width=2pt, draw=coolpink] (4) -- (30.center) -- (31.center) -- (4);
        \draw [opacity=0.5, fill=coolpink, line width=2pt, draw=coolpink] (27.center) -- (26.center) -- (28.center) -- (29.center) -- (27.center);
        \draw [opacity=0.5, fill=coolpink, line width=2pt, draw=coolpink] (5) -- (22.center) -- (24.center) -- (25.center) -- (23.center) -- (5);
        
		\draw (5) to (6);
		\draw (6) to (4);
		\draw (4) to (7);
		\draw (7) to (5);
		\draw (4) to (10);
		\draw (4) to (11);
		\draw (15.center) to (5);
		\draw (10) to (13.center);
		\draw (11) to (12.center);
		\draw (5) to (14.center);
		\draw (4) to (19);
		\draw (4) to (18);
		\draw (18) to (16.center);
		\draw (17.center) to (19);
		\draw (20.center) to (5);
		\draw (21.center) to (5);
		\draw (5) to (22);
		\draw (5) to (23);
		\draw (22) to (24.center);
		\draw (23) to (25.center);
		\draw (26.center) to (28.center);
		\draw (27.center) to (29.center);
		\draw (30.center) to (4);
		\draw (31.center) to (4);

        \node [style=none] (y1) at (barycentric cs:5=0.5,6=0.5,4=0.5,7=0.5) {$Y_1$};
        
        \node [style=none] (y21) at (barycentric cs:4=0.5,10=0.5,13=0.5,12=0.5,11=0.5) {$Y_2$};
        \node [style=none] (y22) at (barycentric cs:5=0.5,14=0.5,15=0.5) {$Y_2$};

        \node [style=none] (y31) at (barycentric cs:4=0.5,18=0.5,16=0.5,17=0.5,19=0.5) {$Y_3$};
        \node [style=none] (y32) at (barycentric cs:5=0.5,20=0.5,21=0.5) {$Y_3$};

        \node [style=none] (y41) at (barycentric cs:4=0.5,30=0.5,31=0.5) {$Y_4$};
        \node [style=none] (y42) at (barycentric cs:27=0.5,26=0.5,28=0.5,29=0.5) {$Y_4$};
        \node [style=none] (y43) at (barycentric cs:5=0.5,22=0.5,24=0.5,25=0.5,23=0.5) {$Y_4$};
        
	\end{pgfonlayer}
\end{tikzpicture}
    \caption{Four zones of a pair $Y = \{u,v\}$ on a torus ($g = 1$).}
    \label{fig:zones-torus}
\end{figure}
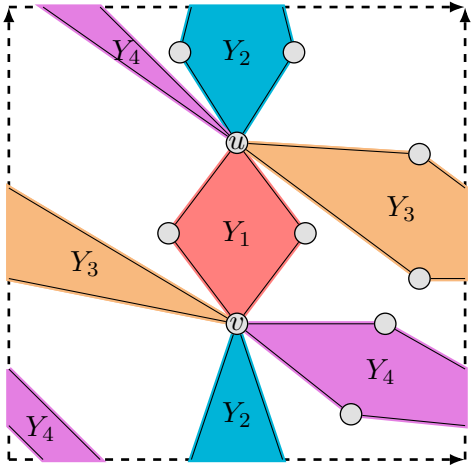

\if 0
\begin{lemma}
\label{lem:protectedness-edge-barrier}
    Let $u,v$ and $u',v'$ be two pairs of $\mathcal{P}$ such that $\{u, v\} \neq \{u', v'\}$. Then there are no edges between $N^p(\{u,v\})$ and $N^p(\{u',v'\})$.
\end{lemma}

\begin{proof}
    Let $w \in N^p(\{u,v\})$ and $w' \in N^p(\{u',v'\})$.
    Suppose for a contradiction that $w$ and $w'$ are adjacent.
    Since $w$ is inside a zone $Z$ of $\{u, v\}$ (with respect to the embedding we fixed earlier), we have that $w'$ must also be inside $Z$. Since $w'$ is inside $Z$ and the edges $u'w'$ and $v'w'$ exist, $u'$ and $v'$ must be inside $Z$.
    With a similar reasoning, there exists a zone $Z'$ of $\{u',v'\}$, such that $w', w, u, v$ are contained in $Z'$. Since zones are bounded by (drawings of) paths of $G$, they cannot cross, so we have a contradiction.
\end{proof}
\fi

\if 0
\begin{theorem}[\cite{cranston2024token}]
\label{thm:reconf-linear-forest}
    Any instance $(F, I, J)$ of \TJ such that $F$ is a linear forest is \YES.
\end{theorem}
\fi


\begin{lemma}
    \label{lem:inners-is-linear-forest}
    If $Y \in \mathcal{P}$, then $G[N^p(Y)]$ is a linear forest.
\end{lemma}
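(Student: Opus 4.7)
The plan is to establish the claim one homotopy class at a time and then assemble the per-class conclusions using the observation, already made in the paragraphs preceding the lemma, that there are no $G$-edges between vertices in different zones of the same pair.

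First I would fix a single homotopy class $\mathcal{C}$ of $u$-$v$ paths through midpoints in $C_Y$ and pass to the planar drawing to which the restriction of $D$ to $\mathcal{C}$ is homeomorphic. The subgraph drawn here is a copy of $K_{2,|C|}$, where $C$ is the set of midpoints of paths in $\mathcal{C}$, and an Euler-characteristic count gives $|C|-1$ internal faces plus the outer face $f_0$. Enumerating the midpoints $c_1,c_2,\ldots,c_\ell$ in the cyclic order in which their paths appear around $u$, the $i$-th internal face is the ``strip'' bounded by the two consecutive paths $u - c_i - v$ and $u - c_{i+1} - v$. Since at most two midpoints touch $f_0$, those must be $c_1$ and $c_\ell$, so the inner midpoints contributed by $\mathcal{C}$ to $N^p(Y)$ are exactly $c_2,\ldots,c_{\ell-1}$.

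Next I would argue that any $G$-edge $e$ with both endpoints in $\{c_2,\ldots,c_{\ell-1}\}$ must be drawn inside a single strip of this restricted subdrawing. The drawing of $e$ is a simple curve in the surface disjoint from every edge of $G$ (hence from every path of $\mathcal{C}$), so it is confined to one face of the restriction; since neither endpoint touches $f_0$, that face is a strip. The only midpoints on the boundary of the $i$-th strip are $c_i$ and $c_{i+1}$, so $e = c_ic_{i+1}$. Therefore $G[\{c_2,\ldots,c_{\ell-1}\}]$ is a subgraph of the path $c_2c_3\cdots c_{\ell-1}$, and in particular a linear forest.

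To conclude, I would observe that inner midpoints coming from different homotopy classes of $Y$ sit in different zones of $Y$, and by the observation preceding the lemma there are no $G$-edges between different zones of the same pair. Hence $G[N^p(Y)]$ is the disjoint union of its per-homotopy-class pieces, each a linear forest, so $G[N^p(Y)]$ is a linear forest. The delicate step is the topological claim that $e$ lies in a single face of the restricted subdrawing rather than being routed through some other part of the surface via the rest of $G$; this reduces to the fact that $D$ is crossing-free, so the curve drawn for $e$ cannot cross any path of $\mathcal{C}$ and is therefore trapped in a single face of the restriction.
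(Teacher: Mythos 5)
Your proposal is correct and follows essentially the same route as the paper: both arguments use the planarity of the restriction of the drawing to a single homotopy class to confine any edge between inner midpoints to a single face (your ``strip''), forcing adjacency only between consecutive midpoints, and then invoke the absence of edges between distinct zones of the same pair to assemble the per-class paths into a linear forest. Your face-counting justification for why only consecutive inner midpoints can be adjacent is somewhat more explicit than the paper's, which simply asserts adjacency only to the predecessor and successor in the peeling order, but the underlying idea is identical.
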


\begin{proof}
    If $w_i$ is an inner vertex of $Y$ in a zone $A$, then $w_i$ only has neighbors in the zone $A$ (since $w_i$ lies inside of a $4$-cycle induced by the two outer vertices of $A$ and the two vertices of $Y$, and all vertices inside of this $4$-cycle are inner vertices of $A$, because by definition all vertices of $A$ are in the same homotopy class).
    Furthermore, $w_i$ can only be adjacent in $G[C_Y]$ to the previous and next vertices of that zone: $w_{i-1}$ and $w_{i+1}$, if they exist. Thus, $G[N^p(Y)]$ has maximum degree at most $2$ and contains no cycles.
\end{proof}

We will now reduce the size of $\mathcal{C}_2$ to $O(g^2 + gk + k^2)$. Fix $Y \in \mathcal{P}$ such that $|C_Y| \geq 2f_M(g) + 4k + 4$. Notice the number of inner vertices of $Y$ is at least $4k + 4$, because the total number of outer vertices of $Y$ (over all zones) is at most $2f_M(g)$. Note that a vertex $v \not\in X$ is either inside a zone of $Y$, or else outside all its zones. 
If $v$ is outside, then $v$ is adjacent to none of the inner vertices of $C_Y$.  But if $v$ is inside a zone of $Y$, then $v$ is adjacent to at most two vertices of $C_Y$.

First, we will find in polynomial time an independent subset $T_Y$ of $C_Y$ of size $2k + 2$, such that each vertex $v \not\in (C_Y \cup Y)$ is adjacent to at most two elements of $T_Y$. Then, we will show that we can restrict $C_Y$ to $T_Y$ and produce an equivalent instance. In our proof, $T_Y$ plays a similar role to black holes introduced in Bartier et al.~\cite{bartier:hal-04323590} for the Token Sliding problem. The vertex set $T_Y$ will absorb any tokens moved to $C_Y$ and all
vertices adjacent to $C_Y$ will be free to receive tokens.

Let $G'$ be the graph formed as follows: for each $Y \in \mathcal{P}$ such that $|C_Y| \geq 2f_M(g) + 4k + 4$, we remove the vertices of $C_Y \setminus T_Y$.

We first show we can compute $T_Y$ in polynomial time.

\begin{algorithm}[H]    
\caption{Filtering algorithm}\label{alg:filtering-bis}
\DontPrintSemicolon
\KwIn{Instance $(G, I, J)$ of \textsc{Token Jumping}, a pair $Y \in \mathcal{P}$ with $|C_Y| \geq 2f_M(g)+4k+4$.}
\KwOut{A linear forest $Z_Y$ such that $|Z_Y| \geq N^p(Y)$ and any vertex $v \not\in (C_Y \cup Y)$ has at most 2 neighbors in $Z_Y$.}
\SetKw{KwGoTo}{go to line}
$Z := C_Y$\\ 
\For{$v \in V(G) - (C_Y \cup Y)$}{
    \lIf{$v$ has at least $3$ neighbors in $C_Y$}{$Z \leftarrow Z - N(v)$}\label{line:alg-ext-degree-cond}
}
\For{$w \in Z \cap C_Y$}{
    \lIf{$w$ has degree at least $3$ in $G[Z]$}{$Z \leftarrow Z' - w$}\label{line:alg-degree-cond}
}
    Remove arbitrarily one vertex\label{line:cycle-cond}\footnote{It is natural to want to remove here an \emph{outer} vertex.  However, the algorithm does not have as part of its input an embedding of the graph, so it may not be possible to quickly determine 
    for each zone which vertices are its outer vertices.} 
    from each cycle in $G[Z]$\\
\Return $Z$
\end{algorithm}

\begin{lemma}
\label{lem:alg-gives-linforest}
    Let $Z_Y$ be the output of \cref{alg:filtering-bis} for a pair $Y \in \mathcal{P}$. Then $G[Z_Y]$ is a linear forest such that any vertex $v \not\in (C_Y \cup Y)$ has at most $2$ neighbors in $Z_Y$.
\end{lemma}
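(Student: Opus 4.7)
The plan is to check the three properties in sequence, tracking how the set $Z$ evolves through the algorithm. Since each step of the algorithm only removes vertices from $Z$, we have $Z \subseteq C_Y$ at every point, and $Z_Y$ is exactly the value of $Z$ at termination.

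First I would verify the external-degree condition: every vertex $v \notin C_Y \cup Y$ has at most two neighbors in $Z_Y$. After the first \textbf{for}-loop, this already holds on the then-current $Z$: either $v$ had at most $2$ neighbors in $C_Y$ (and hence in $Z$, since $Z \subseteq C_Y$), or $v$ had $\ge 3$ neighbors in $C_Y$, in which case line~\ref{line:alg-ext-degree-cond} deleted all of $N(v)$ from $Z$, so $v$ has $0$ neighbors in $Z$. Because the later steps only shrink $Z$, this bound is preserved to the end.

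Next I would argue that after the second \textbf{for}-loop every remaining $w \in Z$ has degree at most $2$ in $G[Z]$. The key observation is monotonicity: whenever a vertex $w$ is examined on line~\ref{line:alg-degree-cond} and \emph{not} removed, it has degree at most $2$ in the current $G[Z]$, and subsequent iterations only delete vertices from $Z$, which can only decrease $w$'s degree in $G[Z]$. So the maximum-degree bound at the moment $w$ is spared persists until the end of this loop. (If one prefers to read line~\ref{line:alg-degree-cond} as iterating over a snapshot, one simply interprets it as a while-loop; the same argument applies.) Hence, after this phase, every component of $G[Z]$ is a path or a cycle.

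Finally, line~\ref{line:cycle-cond} eliminates every cycle by removing one of its vertices. Since $G[Z]$ already had maximum degree at most $2$, deleting vertices cannot raise the degree, so $G[Z_Y]$ still has maximum degree at most $2$ and is now acyclic, i.e., a linear forest. Combining this with the external-degree bound from the first paragraph yields the lemma. The main thing to be careful about is the monotonicity argument in the second paragraph; everything else is essentially bookkeeping on the three deletion rounds.
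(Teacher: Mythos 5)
Your proof is correct and follows essentially the same approach as the paper's: the first loop guarantees the external-degree bound (which is preserved since $Z$ only shrinks), the second loop leaves $G[Z]$ with maximum degree at most $2$ by the monotonicity of degrees under vertex deletion, and the final step breaks all remaining cycles. Your write-up is somewhat more detailed than the paper's (which compresses the monotonicity argument into a one-line contradiction), but the underlying argument is identical.
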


\begin{proof}
    First, we show $G[Z_Y]$ is a linear forest. Assume for a contradiction that, just before reaching~\cref{line:cycle-cond}, there exists a vertex $w \in Z_Y$ such that $d_{G[Z_Y]}(w) \geq 3$, then $w$ would have been removed by \cref{line:alg-degree-cond}. It follows that when reaching~\cref{line:cycle-cond}, the graph $G[Z_Y]$ is of maximum degree 2. The algorithm then removes one vertex from each cycle of $G[Z_Y]$, so the vertex set $Z_Y$ returned by~\cref{alg:filtering-bis} induces a linear forest.

    Finally, \cref{line:alg-ext-degree-cond} ensures that every vertex $v \not\in (C_Y \cup Y)$ has at most $2$ neighbors in $Z_Y$.
\end{proof}

\begin{lemma}
\label{lem:zy-has-correct-size}
    Let $Z_Y$ be the output of \cref{alg:filtering-bis} for a pair $Y \in \mathcal{P}$. We have $|Z_Y| \geq |N^p(Y)|$.
\end{lemma}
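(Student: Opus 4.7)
The plan is to partition $C_Y$ into its outer vertices $O := C_Y \setminus N^p(Y)$ and its inner vertices $N^p(Y)$, and to track, for each loop of \cref{alg:filtering-bis}, which of the two sets loses vertices. The goal is to show that the first two loops remove only outer vertices, so that the cycle-breaking step at \cref{line:cycle-cond} can be paid for entirely by the outer vertices that remain in $Z$.

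For the loop on \cref{line:alg-ext-degree-cond} I would invoke the $K_{3,3}$-in-the-plane argument from the discussion preceding \cref{lem:inners-is-linear-forest}: whenever an external vertex $v \notin C_Y \cup Y$ has at least three neighbors in $C_Y$, all those neighbors must be outer, so only vertices of $O$ are deleted on this line. For the loop on \cref{line:alg-degree-cond}, the proof of \cref{lem:inners-is-linear-forest} already shows that every inner vertex $w_i$ has at most two neighbors in $G[C_Y]$, namely its zone-neighbors $w_{i-1}$ and $w_{i+1}$. This bound passes to $G[Z]$ since $Z \subseteq C_Y$, so the degree-$\geq 3$ test never triggers on an inner vertex and again only outer vertices are removed.

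The main obstacle is \cref{line:cycle-cond}, which breaks each cycle \emph{arbitrarily} and could therefore delete inner vertices. To contain this, I would exploit the structure of $G[Z]$ just before the line: by the first part of \cref{lem:alg-gives-linforest} the graph has maximum degree at most $2$, so its components are paths and pairwise vertex-disjoint cycles. Since $G[N^p(Y)]$ is a linear forest by \cref{lem:inners-is-linear-forest}, no cycle of $G[Z]$ lies entirely within $N^p(Y)$, so every cycle component contains at least one outer vertex; by vertex-disjointness of the cycles, the number of cycles is then at most $|O \cap Z|$. Combined with the fact that the earlier loops depleted only $O$, this yields
\[
    |Z_Y| \;=\; |Z| - (\text{number of cycles}) \;\geq\; |N^p(Y)| + |O \cap Z| - |O \cap Z| \;=\; |N^p(Y)|,
\]
as claimed.
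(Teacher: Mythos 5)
Your proposal is correct. For the two filtering loops it matches the paper's proof: both arguments show that a vertex outside $C_Y \cup Y$ with at least three neighbors in $C_Y$ must lie outside every zone of $Y$ and hence sees only outer vertices, and that an inner vertex has at most two neighbors in $G[C_Y]$, so neither \cref{line:alg-ext-degree-cond} nor \cref{line:alg-degree-cond} ever deletes an inner vertex. Where you genuinely diverge is \cref{line:cycle-cond}. The paper charges each deleted inner vertex to an outer vertex of its own zone, relying on the structural claim that a cycle through an inner vertex must contain all vertices of that zone --- in particular both of its outer vertices --- and that these lie on only one cycle. You instead observe that, since $G[N^p(Y)]$ is a linear forest by \cref{lem:inners-is-linear-forest}, every cycle of the degree-at-most-two graph $G[Z]$ must contain at least one outer vertex; as the cycles are pairwise vertex-disjoint components, their number is at most $|O \cap Z|$, which exactly pays for the one-vertex-per-cycle deletion. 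Your accounting is the more economical of the two: it is global rather than zone-by-zone, and it needs only the linear-forest property of the inner vertices plus disjointness of cycles, sidestepping the paper's per-zone structural claim entirely. Both arguments are sound and give the same bound $|Z_Y| \geq |N^p(Y)|$.
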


\begin{proof}
    We will first show \cref{line:alg-ext-degree-cond} only removes outer vertices of $C_Y$. Let $v$ be a vertex such that $v \not\in (C_Y \cup Y)$. If $v$ is inside some zone of $C_Y$, then $v$ can have at most two neighbors in $C_Y$ (these are the predecessor and successor of $v$ in the linear order constructed after \cref{lem:c2-bound}). If $v$ is outside of any zone of $C_Y$, then $v$ cannot be adjacent to any inner vertex of $C_Y$ (because the the two outer vertices of $C_Y$, together with the two
    vertices in $Y$, induce a cycle that separates $v$ from these inner vertices). So \cref{line:alg-ext-degree-cond} can only remove outer vertices of $C_Y$. Because an inner vertex can only be adjacent to two vertices of $C_Y$, \cref{line:alg-degree-cond} can only remove outer vertices. To conclude, it remains to show that we can associate to each inner vertex removed in \cref{line:cycle-cond} a unique outer vertex. If an inner vertex $w$ of a zone of $Y$ is removed, then this was to
    break a cycle that contained all vertices in the zone; so all these vertices were in $Z_Y$. Therefore, $Z_Y$ contains the two outer vertices of that zone and they appear in only one cycle. We can therefore associate to $w$ either of these two outer vertices of the zone.
\end{proof}

By \cref{lem:alg-gives-linforest,alg:filtering-bis} outputs a vertex set $Z_Y$ of size at least $4k+4$ such that $G[Z_Y]$ is a linear forest, thus we can find in linear time an independent set $T_Y$ contained in $Z_Y$ of size $2k+2$.

\begin{lemma}
    \label{lem:size}
    $|V(G')| = O(g^2 + gk + k^2)$.
\end{lemma}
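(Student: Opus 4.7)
The plan is to decompose $V(G')$ according to the partition used throughout the section and then bound each part using the lemmas already established. Specifically, every vertex of $G'$ belongs to one of the four sets $X$, $\mathcal{C}_1$, $\mathcal{C}_3$, or the residue of $\mathcal{C}_2$ after the reduction; so I will bound the size of each part separately and sum.

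First, $|X| \leq 2k$ by definition. Next, by \cref{lem:c1-bound} we may assume $|\mathcal{C}_1| \leq H(g) \cdot k$, and since $H(g) = \lfloor (7+\sqrt{1+48g})/2 \rfloor = O(\sqrt{g})$, this gives $|\mathcal{C}_1| = O(\sqrt{g}\,k)$. By \cref{lem:c3-bound}, $|\mathcal{C}_3| \leq 16g^2 + 8g(2k-1) + 8k = O(g^2 + gk + k)$.

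The main piece is the contribution of $\mathcal{C}_2$. By construction of $G'$, for every $Y \in \mathcal{P}$ we kept at most $\max(2f_M(g) + 4k + 4,\, |T_Y|) = 2f_M(g) + 4k + 4$ vertices of $C_Y$: either $|C_Y| < 2f_M(g)+4k+4$ and we kept all of $C_Y$, or we replaced $C_Y$ by $T_Y$ of size $2k+2 \leq 2f_M(g) + 4k + 4$. Since $f_M(g) = \max\{1,4g\}$, each pair contributes $O(g+k)$ vertices. By \cref{lem:c2-bound}, $|\mathcal{P}| \leq 3|X| + 6(g-1) = O(g+k)$. Therefore the total contribution of $\mathcal{C}_2$ to $V(G')$ is at most $|\mathcal{P}| \cdot (2f_M(g) + 4k + 4) = O(g+k) \cdot O(g+k) = O(g^2 + gk + k^2)$.

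Summing the four bounds yields $|V(G')| = 2k + O(\sqrt{g}\,k) + O(g^2 + gk + k) + O(g^2 + gk + k^2) = O(g^2 + gk + k^2)$, as desired. No step is genuinely difficult; the only mildly delicate point is observing that the per-pair bound $2f_M(g)+4k+4$ dominates $|T_Y|$, so a single uniform bound applies to all pairs regardless of whether the reduction triggered on that pair.
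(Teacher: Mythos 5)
Your proof is correct and follows essentially the same route as the paper: decompose $V(G')$ into $X$, $\mathcal{C}_1$, $\mathcal{C}_3$, and the reduced $\mathcal{C}_2$, bound the last by $|\mathcal{P}|\cdot(2f_M(g)+4k+4)$ via \cref{lem:c2-bound}, and invoke \cref{lem:c1-bound,lem:c3-bound} for the rest. The only difference is that the paper carries explicit constants through the computation (ending with, e.g., $24k^2+26k$ for $g=0$), whereas you stop at the asymptotic bound, which is all the statement requires.
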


\begin{proof}

    The size of $\mathcal{C'}_2$ in $G'$ can be bounded directly as each pair $Y$ either satisfied $|C_Y| \leq 2f_M(g) + 4k + 4$ or was restricted by \cref{alg:filtering-bis} to a subset $T_Y$ of size $2k + 2$.
\begin{align*}
|V(G')| &= |X| + |\mathcal{C}_1| + |\mathcal{C}_3| + |\mathcal{C}'_2|\\
&\leq 2k + H(g)k + (16g^2 + 8g(2k-1) + 8k) + (|\mathcal{P}|(2f_M(g)+4k+4))\\
&\leq 16g^2 + 8g(2k-1) + H(g)k + 10k + ((3|X|+6(g-1))(2f_M(g)+4k+4))\\
&\leq 16g^2 + 8g(2k-1) + H(g)k + 10k + 12((k+(g-1))(f_M(g)+2k+2))\\
&= 16g^2 + 16gk\! -\! 8g\! +\! H(g)k\! +\! 10k\! +\! 12(f_M(g)k\! +\! 2k^2\! +\! f_M(g)g\! +\! 2gk\! +\! 2g\! -\! f_M(g)\! -\! 2)\\
&= 16g^2 + 12f_M(g)g + 40gk + 12f_M(g)k + 16g - 12f_M(g) + H(g)k + 10k + 24k^2 - 24
\end{align*}

\noindent If $g = 0$, then $|V(G')| \leq 24k^2 + 26k$.\\
\noindent If $g \geq 1$, then $|V(G')| \leq 64g^2 + 88gk + H(g)k - 32g + 10k + 24k^2$.

\aftermath
\end{proof}

\medskip

\begin{theorem}
    \label{thm:equivalence}
    The instances $(G, I, J)$ and $(G', I, J)$ of \TJ are equivalent.
\end{theorem}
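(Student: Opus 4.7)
The direction ``$(G', I, J)$ is a \YES-instance implies $(G, I, J)$ is a \YES-instance'' is immediate: since $G'$ is the induced subgraph of $G$ on $V(G')$ and both $I$ and $J$ lie in $V(G')$, any TJ-sequence witnessing $I \TJeq J$ in $G'$ is also a valid TJ-sequence in $G$. For the converse, the plan is to process the pairs $Y \in \mathcal{P}$ with $|C_Y| \geq 2 f_M(g) + 4k + 4$ one at a time, showing that if $(G, I, J)$ is a \YES-instance then so is the instance obtained by removing $C_Y \setminus T_Y$ from $G$ for a single such $Y$; iterating then yields the equivalence between $(G, I, J)$ and $(G', I, J)$.

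Fix such a $Y$ and set $G_Y := G - (C_Y \setminus T_Y)$. Given a TJ-sequence $I = I_0, \ldots, I_\ell = J$ in $G$, I would build a TJ-sequence in $G_Y$ with the same endpoints as follows. For each $i$, let $A_i := I_i \setminus (C_Y \setminus T_Y)$ and choose $B'_i \subseteq T_Y$ with $|B'_i| = |I_i| - |A_i|$ so that $I'_i := A_i \cup B'_i$ is independent in $G_Y$. Because $I \cup J \subseteq X$ and $C_Y \cap X = \emptyset$, we have $I'_0 = I$ and $I'_\ell = J$ with $B'_0 = B'_\ell = \emptyset$. A valid $B'_i$ exists by a counting argument: whenever $I_i \cap (C_Y \setminus T_Y) \neq \emptyset$ we have $I_i \cap Y = \emptyset$, since every vertex of $C_Y$ is adjacent to both vertices of $Y$; hence the tokens of $A_i$ that could block vertices of $T_Y$ lie outside $C_Y \cup Y$, and by \cref{lem:alg-gives-linforest} each of them has at most $2$ neighbors in $T_Y$. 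Combined with $|T_Y| = 2k+2$, at every step at least $|I_i| - |A_i| + 2$ vertices of $T_Y$ remain free, leaving room to place $B'_i$.

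It remains to connect $I'_i$ and $I'_{i+1}$ by a TJ-sequence in $G_Y$. The first step is an auxiliary claim: for a fixed $A \subseteq V(G_Y) \setminus C_Y$, any two extensions $A \cup B$ and $A \cup B'$ with $B, B' \subseteq T_Y$, $|B| = |B'|$ and both independent in $G_Y$ are TJ-equivalent in $G_Y$; this holds because $T_Y$ is itself an independent set and the counting bound above leaves at least two additional free positions in $T_Y$, so tokens in $T_Y$ can be shuffled one by one. Using this claim, I would transition from $I'_i = A_i \cup B'_i$ to an intermediate set $A_i \cup B^*$ where $B^* \subseteq T_Y$ is chosen to be simultaneously compatible with $A_i$ and $A_{i+1}$ (again by the counting bound, but applied with $A_i \cup A_{i+1}$ in place of $A_i$), then perform a single jump in $G_Y$ mirroring the move from $I_i$ to $I_{i+1}$, rerouting the source or the target to a vertex of $B^*$ whenever the original endpoint lies in $C_Y \setminus T_Y$, and finally transition to $I'_{i+1}$ using the auxiliary claim once more. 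The main obstacle is the case analysis on where the two endpoints of the original jump lie (in $Y$, in $T_Y$, in $C_Y \setminus T_Y$, or outside $C_Y \cup Y$); for each combination one must check that $B^*$ can be chosen so that the rerouted jump is a valid TJ-move in $G_Y$, which ultimately reduces to the same counting inequality on $|T_Y|$.
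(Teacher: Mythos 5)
Your proposal is correct and follows essentially the same strategy as the paper: simulate the given jump sequence step by step, rerouting any token destined for $C_Y\setminus T_Y$ onto a free vertex of $T_Y$, and resolving conflicts by shuffling stand-in tokens within the independent set $T_Y$, with both steps justified by the same counting argument ($|T_Y|=2k+2$ and each token blocking at most two vertices of $T_Y$ via \cref{lem:alg-gives-linforest}). The only differences are organizational — you eliminate the pairs $Y$ one at a time and define the intermediate configurations $I'_i$ explicitly rather than inductively — which if anything makes the case analysis in the paper's step (b) a bit cleaner.
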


\begin{proof}
    Since $G'$ is an induced subgraph of $G$, if $(G', I, J)$ is \YES, then also $(G, I, J)$ is \YES.
    It remains to show that if $(G, I, J)$ is \YES, then also $(G', I, J)$ is \YES.

    Assume that $(G, I, J)$ is \YES and let $\sigma$ be a sequence of token jumps certifying this. 
    We will use $\sigma$ to construct a jump sequence $\sigma'$ in $G'$ such that after each step in $\sigma'$ the positions of the tokens agree exactly with those after a corresponding step in $\sigma$, except that tokens in $G'$ (moved by $\sigma'$) may be arbitrarily rearranged among the vertices of $C_Y$ for each $Y$ such that $C_Y\subseteq \mathcal{C}_2$.  We proceed by induction on the length $\ell$ of $\sigma$.  We remark that we may have multiple
    steps in $\sigma'$ that correspond to the same step in $\sigma$.

    For each step in $\sigma$, we try to copy it in $\sigma'$.  If the target vertex $v$ appears in $G'$ and the vertices with tokens (after moving a token to $v$) induce an independent set, then we make the same move to extend $\sigma'$.
There are two reasons why this might not be the case, and we consider them both below.

    (a) If the target vertex $v$ is absent from $G'$, then $v\in C_Y\setminus T_Y$ for some pair $Y$.  So it suffices to show there exists $v'\in T_Y$ such that ($v'\in V(G')$ and) no token currently appears in $N_{G'}[v']$.  To see this, using \cref{lem:alg-gives-linforest}, note that each token appears in $N_{G'}[w]$ for at most two vertices $w\in T_Y$.  Since $|T_Y|=2k+2$, the number of available vertices $v'\in T_Y$ is at least $|T_Y|-2|I| = (2k+2)-2k=2$.
    

    (b) Suppose that $\sigma$ moves a token from a vertex $u$ to a vertex $v$ and this causes a conflict. That is, the token on $u$ moves to a vertex $v$ that is adjacent to a vertex in some $T_Y$ that contains a token due to (a). There may be many $Y$ that are concerned, which will all be handled sequentially. This step would be conceptually simpler (that is, we would only need to 
    consider at most one such $Y$) if we could guarantee that $T_Y$ includes neither outer vertices nor any of their neighbors. 
    However, it is difficult to ensure this, since we have no embedding of the graph.

    Assume that the target vertex $v$ is adjacent to at least one vertex $t_1$ of $T_Y$ with a token (possibly $u = t_1$) and possibly to some other vertex $t_2 \in T_Y$, if it exists. At most $2k$ vertices of $T_Y$ contain a token or have a token in their neighborhood. Therefore, we can move the token on $t_1$ to at least two vertices $t'_1, t'_2 \in T_Y$. However we do not want to move the token on $t_1$ to a neighbor of $v$ as that would not resolve the conflict. But by construction, $v$ is adjacent to $t_1$ and at most one other vertex of $T_Y$. So at least
    one of $t'_1$ and $t'_2$ is not adjacent to $v$, and we move the token from $t_1$ to that vertex. If $v$ is adjacent to another token on $T_Y$, then both $t'_1$ and $t'_2$ are not adjacent to $v$, so we move the tokens on $t_1$ and $t_2$ to $t'_1$ and $t'_2$.
%
\end{proof}

By combining \cref{lem:size} and \cref{thm:equivalence} we obtain \cref{thm:main-result-kernel}.

\section{A kernel for \TJ on $K_{2,3}$-free graphs}

In this section we show that the same techniques as in \cref{sec:kernel} can be used to obtain a sub-quadratic kernel for \TJ on $K_{2,3}$-free graphs, which notably include outerplanar graphs. Notice that \TJ remains \PSPACE-complete on $K_{2,3}$-free graphs by~\cite[Theorem 2]{BBM:24}.
The analysis will be slightly different for $\mathcal{C}_3$ and significantly simpler for $\mathcal{C}_2$. We will use the same notation as in \cref{sec:kernel}.


\begin{theorem}
\label{thm:k23free-main-kernel}
    \TJ parametrized by the size $k$ of the independent sets and the genus $g$ of the $K_{2,3}$-free input graph admits a kernel of size $O(g + \sqrt{g}k + k)$.
\end{theorem}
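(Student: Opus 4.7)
The plan is to revisit the framework of \cref{sec:kernel} one class at a time and show that the $K_{2,3}$-free hypothesis collapses the three bounds on $\mathcal{C}_1, \mathcal{C}_2, \mathcal{C}_3$ into expressions of order $g + \sqrt{g}\,k + k$, with no algorithmic reduction of $\mathcal{C}_2$ required at all. I will keep the same notation $X = I \cup J$, $\mathcal{P}$, and $\mathcal{C}_i$ as in \cref{sec:kernel}.

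First, \cref{lem:c1-bound} carries over verbatim: its proof uses only Heawood's coloring bound and greedy jumping, neither of which interacts with the forbidden $K_{2,3}$. Hence either the instance is \YES\ immediately, or we may assume $|\mathcal{C}_1| \leq H(g)\cdot k = O(\sqrt{g}\,k + k)$.

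The main simplification concerns $\mathcal{C}_2$: every pair $Y = \{u,v\} \in \mathcal{P}$ satisfies $|C_Y| \leq 2$, since three vertices of $C_Y$ together with $\{u,v\}$ would form a $K_{2,3}$ subgraph. Combined with \cref{lem:c2-bound}, this yields $|\mathcal{C}_2| \leq 2|\mathcal{P}| = O(g + k)$, replacing the entire algorithmic machinery of \cref{alg:filtering-bis} and \cref{thm:equivalence} by a one-line structural bound; this is the ``significantly simpler'' part announced at the start of the section. The analysis of $\mathcal{C}_3$ is only slightly different but equally short: for any $Y \subseteq X$ with $|Y| \geq 3$, two distinct vertices with neighborhood exactly $Y$ would, together with any three vertices of $Y$, create a $K_{2,3}$; hence $|C_Y| \leq 1$. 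Re-using the topological construction from the proof of \cref{lem:c3-bound}, which bounds the number of sets $Y \subseteq X$ with $|Y| \geq 3$ and $C_Y \neq \emptyset$ by $2|X| + 4(g-1)$, we obtain $|\mathcal{C}_3| \leq 2|X| + 4(g-1) = O(g + k)$; the $K_{3,4g+3}$-freeness factor of the bounded-genus argument disappears entirely.

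Summing the four contributions $|X| + |\mathcal{C}_1| + |\mathcal{C}_2| + |\mathcal{C}_3|$ gives the claimed kernel size $O(g + \sqrt{g}\,k + k)$, and the kernelization algorithm is immediate: run the \YES-check of \cref{lem:c1-bound}, and otherwise return the graph unchanged. I do not foresee any serious obstacle: the structural lemmas of \cref{sec:kernel} do all the heavy lifting, and the $K_{2,3}$-free condition bypasses the delicate filtering argument on $\mathcal{C}_2$ altogether; the only care needed is to verify that the constants in \cref{lem:c2-bound,lem:c3-bound} do not implicitly depend on the full genus-$g$ assumption in a way that fails for $K_{2,3}$-free subgraphs, which they do not, since both proofs only use Euler's formula applied to a derived graph of genus at most $g$.
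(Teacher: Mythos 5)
Your proposal is correct and follows essentially the same route as the paper's own proof: apply \cref{lem:c1-bound} to handle $\mathcal{C}_1$, use \cref{lem:c2-bound} together with $|C_Y|\le 2$ (else a $K_{2,3}$ on $\{u,v\}$ and three common neighbors) to bound $\mathcal{C}_2$ without any filtering, and use the face-counting bound $2|X|+4(g-1)$ from the proof of \cref{lem:c3-bound} together with $|C_Y|\le 1$ to bound $\mathcal{C}_3$. The only difference is cosmetic: the paper tracks explicit constants (arriving at $18g+H(g)k+18k$) while you state the bounds asymptotically.
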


\begin{proof}
    From \cref{lem:c1-bound}, we know that if $|\mathcal{C}_1| \geq H(g) \cdot k$, we have a \textsc{Yes}-instance. We therefore assume that $|\mathcal{C}_1| < H(g) \cdot k$.
    
    We now show that $|\mathcal{C}_2| \leq 12k+12g$. By \cref{lem:c2-bound}, $|\mathcal{P}| \leq 6k + 6g$. However there are no $K_{2,3}$ subgraphs, so for any $Y \in \mathcal{P}$, we have $|C_Y| \leq 2$. Hence $|\mathcal{C}_2| \leq 12k + 12g$.

    Finally, we show that $|\mathcal{C}_3| \leq 4k + 4g$. The beginning of the proof for \cref{lem:c3-bound} still holds: there are at most $2|X| + 4(g-1)$ sets $Y \subseteq X$ such that $|Y| \geq 3$ and $C_Y \neq \emptyset$. However, since there is no $K_{2,3}$ subgraph, we have that $|C_Y| = 1$ for any such $Y$.

    Combining the bounds above we obtain
    \begin{align*}
        |V(G)| &= |X| + |\mathcal{C}_1| + |\mathcal{C}_2| + |\mathcal{C}_3|\\
        &\leq 2k + H(g)k + (12k + 12g) + (4k + 4g)\\
        &\leq 18g + H(g)k + 18k.
    \end{align*}
\aftermath    
\end{proof}

Observe that all that is required for the kernelization algorithm is to compute the size of $|\mathcal{C}_1|$, which can be done in time $O(k \cdot |E(G)|)$.
Noticing that any outerplanar graph is $3$-colorable, we may assume that $|\mathcal{C}_1| < 3k$, which gives a linear kernel for \TJ on outerplanar graphs (\TJ on outerplanar graphs is not known to be \PSPACE-hard). 
\begin{corollary}
    \TJ parameterized by the size $k$ of the independent sets admits a kernel of size $21k$ on outerplanar graphs.
\end{corollary}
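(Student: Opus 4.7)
The plan is to apply \cref{thm:k23free-main-kernel} to outerplanar graphs and then tighten the constant. Outerplanar graphs are planar, so $g=0$, and they are $K_{2,3}$-free (by the Chartrand--Harary characterization of outerplanar graphs as those excluding both $K_4$ and $K_{2,3}$ as a minor), so the theorem applies.

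To sharpen the bound on $|\mathcal{C}_1|$ I would use that outerplanar graphs are $3$-colorable. The argument of \cref{lem:c1-bound} goes through verbatim: if $|\mathcal{C}_1| \geq 3k$, then some color class of a proper $3$-coloring of $G[\mathcal{C}_1]$ has size at least $k$, giving an independent set $I_m \subseteq \mathcal{C}_1$ of size $k$ to which we can greedily move both $I$ and $J$, hence answering \YES. So I may assume $|\mathcal{C}_1| < 3k$, replacing the generic $H(g)k$ bound used in the proof of \cref{thm:k23free-main-kernel}. Plugging $g=0$ into the remaining bounds from that proof yields $|\mathcal{C}_2| \leq 12k$ and $|\mathcal{C}_3| \leq 4k$; combined with $|X| \leq 2k$, summing gives
\[
|V(G)| \;\leq\; 2k + 3k + 12k + 4k \;=\; 21k.
\]

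The algorithmic side is straightforward: as observed just before the statement, testing whether $|\mathcal{C}_1| \geq 3k$ takes time $O(k \cdot |E(G)|)$, and a proper $3$-coloring of an outerplanar graph can be produced in linear time, which also supplies the independent set $I_m$ when the \YES-branch applies. I do not see any real obstacle: the substantive work has already been done in \cref{thm:k23free-main-kernel}, and the corollary only asks for the substitution $g=0$ together with the improved chromatic bound on $|\mathcal{C}_1|$.
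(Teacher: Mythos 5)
Your proposal is correct and follows exactly the paper's route: specialize \cref{thm:k23free-main-kernel} to $g=0$ using $K_{2,3}$-freeness of outerplanar graphs, and replace the $H(g)k$ bound on $|\mathcal{C}_1|$ by $3k$ via $3$-colorability, giving $2k+3k+12k+4k=21k$. Nothing is missing.
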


\bibliographystyle{plain}
\bibliography{biblio}

\end{document}